\documentclass[11pt]{article}

\usepackage[numbers]{natbib}
\usepackage{amsmath}
\usepackage{amssymb}
\usepackage{amsthm}
\usepackage{xcolor}
\usepackage{hyperref}
\usepackage{enumitem}
\usepackage[margin=1in]{geometry}

\hypersetup{colorlinks=true,citecolor=blue,linkcolor=blue}
\allowdisplaybreaks

\newcommand{\paperTitle}{Improved RIP Bounds for Gaussian Partial Circulant Matrices}
\newcommand{\paperAuthor}{Zhao Song\thanks{\texttt{magic.linuxkde@gmail.com}.}}

\DeclareMathOperator{\E}{\mathbb{E}}
\renewcommand{\d}{\mathrm{d}}

\theoremstyle{plain}
\newtheorem{theorem}{Theorem}[section]

\newtheorem{definition}[theorem]{Definition}

\theoremstyle{definition}

\begin{document}

\date{}
\title{\paperTitle}
\author{\paperAuthor}
\maketitle

\begin{abstract}
We prove an improved restricted isometry bound for Gaussian partial circulant
matrices with arbitrary prescribed sampling sets.  There is a universal constant
$C>0$ such that the following holds.  Let $1\leq K\leq m\leq N$ be positive
integers, let $\Omega\subset\mathbb Z_N$ be any fixed set with $|\Omega|=m$,
and let $g\sim\mathcal N(0,I_N)$.  For every $\delta,\eta\in(0,1)$, the
normalized partial circulant matrix generated by $g$ has the RIP of order $K$
with constant at most $\delta$, with probability at least $1-\eta$ over the
draw of $g$, provided
\[
    m\geq C\delta^{-2}K
    \max\{\log^2(eK)\log(2N)\log(em),\log(2/\eta)\}.
\]
The proof refines the Maurey entropy step in the chaos-process argument by
combining a noncommutative Khintchine inequality with a Schatten moment
estimate controlled by $m$, replacing one factor $\log(2N)$ in the
Krahmer--Mendelson--Rauhut bound by $\log(em)$.

\end{abstract}


\section{Introduction}

Cand{\`e}s and Tao~\cite[Definition~1.1]{ct05} introduced the restricted
isometry constants and the sparse near-isometry condition now known as the
restricted isometry property (RIP), which requires a linear map to
approximately preserve the Euclidean norm of every sparse vector.  RIP is a
canonical uniform embedding property.  For matrices with independent
subgaussian entries, the benchmark row complexity is
$m\asymp K\log(eN/K)$~\cite{mpt08,bdd08}, up to the dependence on the
distortion and failure probability.  A central question is the extent to which
this scaling persists under strong algebraic dependence.

Gaussian partial random circulant matrices form a particularly rigid test
case.  A single vector $g\sim\mathcal N(0,I_N)$ generates the convolution
operator $C_g$, which is then restricted to a prescribed set
$\Omega\subset\mathbb Z_N$ of $m$ coordinates.  Consequently, the sampled
rows are dependent shifts of a single Gaussian vector.  Earlier uniform RIP
estimates for arbitrary fixed row sets had a worse dependence on the sparsity
level.  Rauhut~\cite[Corollary~III.3]{rau09} obtained a quadratic-in-$K$
bound, and Rauhut--Romberg--Tropp~\cite[Theorems~1.1 and~1.2]{rrt12} reduced
the leading dependence to $K^{3/2}$; they also observed that their expectation
estimate applies to Gaussian generators~\cite[Section~1.3]{rrt12}.  Using
generic chaining, Krahmer--Mendelson--Rauhut~\cite[Theorem~4.1]{kmr14}
reduced the polynomial dependence to linear in $K$, up to logarithmic factors.
For arbitrary deterministic sampling sets, and with the failure-probability
term suppressed, their sample condition is
$m\gtrsim\delta^{-2}K\log^2(eK)\log^2(2N)$.

We show that one ambient-dimension logarithm can be replaced by the
measurement-dependent factor $\log(em)$.  Specifically, we establish the
sufficient condition
\[
    m\gtrsim\delta^{-2}K\log^2(eK)\log(2N)\log(em).
\]
This yields a strict
asymptotic improvement when $\log m=o(\log N)$; when $m$ is polynomial in
$N$, we do not claim an improvement in logarithmic order.

The key ingredient is a refined coarse-scale entropy estimate.  Rather than
first dominating the operator metric by a Fourier $\ell_\infty$ norm and
incurring the cost of a maximum over $N$ frequencies, we work directly with
the $m\times N$ partial-shift matrices and apply a noncommutative Khintchine
inequality in Schatten classes.  The resulting Schatten moment bound depends
on the number of rows $m$ rather than the ambient dimension, so the Rademacher
estimate contributes
$\sqrt{\log(em)}$ rather than $\sqrt{\log N}$.  The subsequent Maurey argument
squares this estimate and produces the desired $\log(em)$ factor; the remaining
chaining argument follows the chaos-process framework of~\cite{kmr14}.

\subsection{Our results}

\begin{definition}[Restricted isometry property]
\label{def:rip}
\normalfont
For $A\in\mathbb R^{m\times N}$, define its restricted isometry constant of
order $K$ by
\[
    \delta_K(A):=\sup_{\substack{x\in\mathbb R^N,\ \|x\|_2=1\\ \|x\|_0\leq K}} |\|Ax\|_2^2-1|.
\]
We say that $A$ satisfies the restricted isometry property of order $K$ with
constant $\delta$ if $\delta_K(A)\leq\delta$.
\end{definition}

\begin{definition}[Partial random circulant matrix]
\label{def:partial_random_circulant}
\normalfont
Let $g\sim\mathcal N(0,I_N)$, and let $C_g\in\mathbb R^{N\times N}$ be the
circulant matrix defined by $C_gx=g*x$.  Here $*$ denotes the unnormalized
circular convolution $(g*x)_r:=\sum_{j\in\mathbb Z_N}g_{r-j}x_j$ on
$\mathbb Z_N$, with indices taken modulo $N$.  For a fixed set
$\Omega\subset\mathbb Z_N$ with $|\Omega|=m$, let
$R_\Omega:\mathbb R^N\to\mathbb R^m$ denote the coordinate-restriction
operator, with the coordinates indexed by $\Omega$ listed in any fixed order,
and define
\[
    A_\Omega(g):=\frac{1}{\sqrt m}R_\Omega C_g\in\mathbb R^{m\times N}.
\]
Following Krahmer--Mendelson--Rauhut~\cite[Sections~1.2 and~4]{kmr14}, we
call $A_\Omega(g)$ a partial random circulant matrix generated by $g$.
Throughout, $\Omega$ is fixed, and the only randomness is in $g$.
\end{definition}

\begin{theorem}[Main result, Gaussian partial circulant RIP]
\label{thm:gaussian_upper}
There is a universal constant $C>0$ such that the following holds.  Let
$1\leq K\leq m\leq N$ be positive integers, let
$\Omega\subset\mathbb Z_N$ be any fixed set of $m$ distinct coordinates, and
let $g\sim\mathcal N(0,I_N)$.  Then, for every $\delta,\eta\in(0,1)$,
\[
    \Pr_g[\delta_K(A_\Omega(g))\leq\delta]
    \geq 1-\eta
\]
whenever
\begin{equation}
\label{eq:gaussian_sample_implicit}
    m
    \geq
    C\delta^{-2}K
    \max\{
        \log^2(eK)\log(2N)\log(em),
        \log(2/\eta)
    \}.
\end{equation}
\end{theorem}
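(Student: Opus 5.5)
We follow the chaos-process framework of Krahmer--Mendelson--Rauhut. For $x\in\mathbb R^N$, write $A_\Omega(g)x=V_xg$, where $V_x:=m^{-1/2}R_\Omega C_x\in\mathbb R^{m\times N}$; this uses the fact that convolution is commutative. Put $D_{K}:=\{x:\|x\|_2=1,\|x\|_0\le K\}$ and $\mathcal A:=\{V_x:x\in D_K\}$. Then
\[
\delta_K(A_\Omega(g))=\sup_{V\in\mathcal A}\bigl|\|Vg\|_2^2-\E\|Vg\|_2^2\bigr|.
\]
The identity $\E\|V_xg\|_2^2=\|V_x\|_F^2=\|x\|_2^2$ holds because each row of $V_x$ is a shift of $x/\sqrt m$.

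We invoke the KMR tail bound for suprema of Gaussian chaos processes. Writing $d_F=\sup\|V\|_F$, $d_{2\to2}=\sup\|V\|_{2\to2}$ and $\gamma_2=\gamma_2(\mathcal A,\|\cdot\|_{2\to2})$, and setting
\[
E=\gamma_2(\gamma_2+d_F)+d_Fd_{2\to2},\quad V_0=d_{2\to2}(\gamma_2+d_F),\quad U=d_{2\to2}^2,
\]
the bound reads
\[
\Pr\bigl[\sup|\cdot|\ge c_1E+t\bigr]\le 2\exp\bigl(-c_2\min\{t^2/V_0^2,\,t/U\}\bigr).
\]
The trivial parameters are $d_F=1$ and $d_{2\to2}\le\sqrt{K/m}$. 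The latter holds because $\|V_x\|_{2\to2}\le m^{-1/2}\|\hat x\|_\infty\le m^{-1/2}\|x\|_1$; a sharper route is Cauchy--Schwarz on rows.

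Everything then reduces to proving
\[
\gamma_2\lesssim\sqrt{K/m}\,\log(eK)\sqrt{\log(2N)\log(em)}.
\]
Granting this, under \eqref{eq:gaussian_sample_implicit} we get $\gamma_2\le\delta/(2c_1)$ and $c_1E\le\delta/2$. Taking $t=\delta/2$, the tail exponent is $\gtrsim\delta^2m/K\ge C\log(2/\eta)$, which gives failure probability at most $\eta$.

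To bound $\gamma_2$, we use Dudley's integral:
\[
\gamma_2\lesssim\int_0^{d_{2\to2}}\sqrt{\log\mathcal N(\mathcal A,\|\cdot\|_{2\to2},u)}\,\d u,
\]
noting that $x\mapsto V_x$ is linear. At small scales, $u\lesssim\sqrt{K/m}/\sqrt K$ or so, we use the volumetric bound
\[
\log\mathcal N\le K\log(eN/K)+K\log(1+2\sqrt{K/m}/u),
\]
with the norm $\|V_x\|_{2\to2}\le\sqrt{K/m}\|x\|_2$ on $K$-sparse vectors.

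At coarse scales we use Maurey's empirical method on $D_K\subset\sqrt K B_1^N$. We write $x/\sqrt K$ as a convex combination of $\pm e_j$, sample $L$ atoms, and control the error via symmetrization:
\[
\E\Bigl\|\sum_{\ell=1}^L\varepsilon_\ell V_{e_{j_\ell}}\Bigr\|_{2\to2}.
\]
This is the new step. The matrices $V_{e_j}=m^{-1/2}R_\Omega S_j$ are partial shifts, i.e.\ partial permutation matrices with $m$ nonzero entries. Instead of passing to $\|\hat x\|_\infty$, which costs $\sqrt{\log N}$, we apply noncommutative Khintchine in $S_p$ with $p\asymp\log(em)$. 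The rank of the sum is at most $m$ on the row side, so $\|\cdot\|_{2\to2}\le\|\cdot\|_{S_p}\le e\|\cdot\|_{2\to2}$ in this regime. Khintchine then bounds the expectation by
\[
\sqrt p\,\max\Bigl\{\Bigl\|\bigl(\sum V V^*\bigr)^{1/2}\Bigr\|_{S_p},\Bigl\|\bigl(\sum V^* V\bigr)^{1/2}\Bigr\|_{S_p}\Bigr\}.
\]
Here $\sum_\ell V_{e_{j_\ell}}V_{e_{j_\ell}}^*=(L/m)I_m$. The column Gram term is diagonal with entries counting how many times each column is hit; its operator norm needs care, since a column may be hit by up to $L$ shifts. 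This yields the error bound $\lesssim\sqrt{K\log(em)/(mL)}$ times a constant. Choosing $L\asymp K\log(em)/(mu^2)$ and counting choices gives
\[
\log\mathcal N(u)\lesssim\frac{K}{m}\,\frac{\log(em)\log(2N)}{u^2}.
\]

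Plugging both regimes into Dudley's integral gives $\sqrt{K/m}\sqrt{\log N\log(em)}$ from the coarse part, times the logarithmic length of the $u$-range, which is $\log(eK)$. The fine part contributes at most the same order. This matches the desired $\gamma_2$ bound. The main obstacle I expect is the Schatten moment step, in particular bounding the column-square-function term $\|(\sum V^*V)^{1/2}\|_{S_p}$ by something depending on $m$ and not on $N$. Since that Gram matrix is $N\times N$ diagonal with possibly many nonzero entries, the $S_p$ norm could pick up a factor like $N^{1/p}$. I would first try to handle this with $S_p\le$ rank$^{1/p}\cdot\|\cdot\|_{2\to2}$ restricted to the support, which has size at most $mL$, so that $p\asymp\log(emL)$. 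I would then absorb $\log L\lesssim\log(eK)+\log(em)$ using $L\le$ poly$(m)$ at the relevant scales. The fallback is to cap the coarse regime at $u\ge m^{-1/2}$.
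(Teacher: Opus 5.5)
Your proposal is correct and follows the paper's proof essentially step for step. It uses the same reduction to the KMR chaos bound with $d_F=1$ and $d_{\mathrm{op}}\le\sqrt{K/m}$, the same Dudley split at $u_0=m^{-1/2}$ between a volumetric fine-scale cover and a Maurey coarse-scale cover, and the same noncommutative Khintchine step in $S_p$ in place of the Fourier $\ell_\infty$ bound.

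The one difference is the column square function you flagged. The paper uses both $c_j\le L$ and $\sum_j c_j=mL$ to get $\sum_j c_j^{p/2}\le L^{p/2-1}\sum_j c_j=L^{p/2}m$. So $p=2\log(em)$ works for every $L$, and the coarse entropy bound holds at all $u\in(0,\Delta]$. Your support-size bound $(mL)^{1/p}\sqrt L$ instead forces $p\asymp\log(emL)$. That in turn forces the restriction $u\ge m^{-1/2}$, where $L\lesssim m\log(em)$ and $K\le m$ give $\log(emL)\lesssim\log(em)$. This restriction is harmless because Dudley's integral only uses the coarse bound on that range, but it is required for your argument, not an optional fallback.
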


\paragraph{Improvement over KMR.}
For arbitrary fixed row sets, Krahmer--Mendelson--Rauhut~\cite[Theorem~4.1]{kmr14}
prove RIP under the condition
\[
    m
    \gtrsim
    \delta^{-2}K
    \max\{
        \log^2(eK)\log^2(2N),
        \log(2/\eta)
    \}.
\]
In the first term of this condition, Theorem~\ref{thm:gaussian_upper}
replaces one factor $\log(2N)$ by $\log(em)$.  For fixed $\delta$ and $\eta$,
taking $K=O(1)$ improves the sufficient measurement bound from
$m=O(\log^2N)$ to $m=O(\log N\log\log N)$.  More generally, if $\delta$ and
$\eta$ are fixed and $K=(\log N)^q$ for a fixed $q>0$, it suffices to take
\[
    m
    =
    O(K\log N(\log\log N)^3).
\]
The gain in the first term is asymptotically nontrivial whenever
$\log m=o(\log N)$.  In contrast, if $K=N^\alpha$ for a fixed
$0<\alpha<1$, then $\log m=\Theta(\log N)$ at the sufficient scale guaranteed
by the theorem, so our sufficient bound and the KMR bound have the same
logarithmic order.

The proof sharpens the coarse-scale Maurey estimate of~\cite{kmr14}; see
Table~\ref{tab:kmr_schatten_comparison} for a comparison.

\paragraph{Random row sets.}
The fixed-row and random-row models differ in the order of quantifiers.  In our
setting, $\Omega$ is prescribed before $g$ is drawn and may be any subset of
$\mathbb Z_N$ with $m$ elements.  All probability is over $g$, and the
probability estimate is required uniformly over every such $\Omega$.  To make
this distinction explicit, define
$q(\Omega):=\Pr_g[\delta_K(A_\Omega(g))>\delta]$.  The two guarantees are
\[
\sup_{\Omega\subset\mathbb Z_N,\ |\Omega|=m}q(\Omega)\leq\eta
\qquad\text{versus}\qquad
\E_{\Omega}[q(\Omega)]\leq\eta.
\]
In the second inequality, $\Omega$ is independent of $g$ and uniformly
distributed over the $m$-subsets of $\mathbb Z_N$.  The supremum inequality is
the arbitrary-fixed-row guarantee, whereas the expectation inequality is the
random-row guarantee.  At the same sample size and failure level, the former is
strictly stronger and correspondingly more demanding to establish: it controls
every prescribed, possibly highly structured sampling pattern without
averaging over the rows.  It does not, however, assert that a single draw of
$g$ succeeds simultaneously for all $\Omega$.

An arbitrary-fixed-row theorem immediately implies the corresponding guarantee
for a uniformly random $m$-subset without replacement, with no loss, by
averaging over $\Omega$.  The converse is false because
$\E_{\Omega}[q(\Omega)]\leq\eta$ does not control
$\sup_{\Omega}q(\Omega)$.  The average bound implies the existence of at least
one row set satisfying the same failure bound and, by Markov's inequality,
$\Pr_{\Omega}[q(\Omega)>t]\leq\eta/t$ for $t>0$.  For a prescribed set
$\Omega_0$ of positive sampling probability, it gives only the generally
vacuous estimate
$q(\Omega_0)\leq\eta/\Pr_{\Omega}[\Omega=\Omega_0]$.  Consequently, averaging
alone does not convert a random-row theorem into an arbitrary-fixed-row
theorem, even if one permits a universal multiplicative increase in $m$.

Krahmer--Mendelson--Rauhut~\cite[Theorem~4.1]{kmr14} study the same
arbitrary-fixed-row, standard two-sided $\ell_2/\ell_2$-RIP problem.  Their
$\Omega$ is an arbitrary fixed set of $m$ distinct row indices, and the only
randomness comes from a subgaussian generator with independent entries.  Our
matrix model, RIP notion, and order of quantifiers coincide with theirs; we
specialize the generator to a standard Gaussian vector and improve the sample
bound in this case.  The comparison with KMR is therefore direct and does not
rely on converting a random-row result.

By contrast, Huang--Pang--Xu~\cite[Theorem~1.1]{hpx19} draw $\Omega$ as a
multiset of $m$ independent uniform indices, so the rows are random and sampled
with replacement.  For fixed distortion and $K\lesssim N/\log^4N$, their
$m\gtrsim K\log^2(eK)\log N$ bound applies when the generator has i.i.d.
mean-zero, variance-one entries bounded by a fixed constant.  Their joint
probability estimate over the row multiset and the generator does not yield a
uniform estimate for an arbitrary prescribed $\Omega_0$ at the same failure
level and therefore does not establish this sample bound in our fixed-row
setting.  Huang--Pang--Xu~\cite[Remark~1.2]{hpx19} further observe that applying
their argument on the high-probability boundedness event for a Gaussian
generator yields the larger sample requirement
$m\gtrsim K\log^2(K\log N)\log^2N$, not the Gaussian bound proved here.  Their
results and ours therefore concern genuinely different probability models.

\paragraph{Modified circulant ensembles.}
In a related direction, Nelson--Price--Wootters~\cite[Theorems~6 and~9]{npw14}
obtain RIP matrices supporting fast matrix--vector multiplication under the
sufficient condition
$m\gtrsim\delta^{-2}K\log N\log^2(K\log N)$, using a subsampled circulant
matrix as an intermediate component.  Their construction, however, draws the
row multiset at random and aggregates the selected rows using fresh random
signs.  Consequently, the resulting ensemble has random rows and is no longer
a partial circulant matrix with a prescribed row set.

\paragraph{Fourier ensembles.}
A parallel line of work studies partial Fourier matrices, in which a
deterministic Fourier transform is restricted to a random set of rows.
Rudelson and Vershynin~\cite{rv08} obtained an early near-linear row bound,
up to polylogarithmic factors.  For fixed distortion,
Bourgain~\cite{bou14} established the sufficient row count
$m=O(K\log K\log^2N)$ for RIP of order $K$.  Haviv and
Regev~\cite[Theorem~4.5]{hr17} subsequently obtained the fixed-distortion
bound $m=O(K\log^2K\log N)$ for uniformly and independently sampled rows of
any unitary matrix with entries of magnitude $O(N^{-1/2})$; their general
distortion-dependent sufficient bound is
$m=O(\delta^{-2}\log^2(1/\delta)K\log^2(K/\delta)\log N)$.
On the lower-bound side, Bandeira, Lewis, and Mixon~\cite[Theorem~16]{blm18}
proved that, for sufficiently large $K$ dividing $N$ and fixed $\delta<1/3$, the
requirement $m=\Omega_\delta(K\log N)$ is necessary for the cyclic discrete
Fourier matrix to achieve $K$-RIP with probability at least $2/3$.  For the
Walsh matrix, equivalently the Fourier transform on $\mathbb F_2^n$,
B{\l}asiok et al.~\cite{bllmr23} showed that, when
$\min(K,N/K)>\log^C N$, the expected number of sampled rows must satisfy
$m=\Omega(K\log K\log(N/K))$: below this scale, independent Bernoulli row
sampling leaves a $K$-sparse vector in the kernel with probability $1-o(1)$.
This stronger Walsh lower bound is specific to the Walsh ensemble and does not
establish the same lower bound for the cyclic discrete Fourier matrix.  These
results do not directly transfer to the ensemble studied here: in the partial
Fourier model the transform is deterministic and the randomness lies in the
sampled row set, whereas we prescribe $\Omega$ arbitrarily and randomize the
Gaussian circulant generator.  Consequently, the Fourier results do not imply
an arbitrary-fixed-row RIP bound for our Gaussian partial circulant matrix.
\section{A Schatten-moment entropy bound for Gaussian partial circulant matrices}

\paragraph{Notation.}
For $x\in\mathbb R^N$ and $1\leq p\leq\infty$, $\|x\|_p$ denotes the $\ell_p$ norm, and $\|x\|_0$ denotes the number of nonzero coordinates of $x$. We write $e_j$ for the $j$-th standard basis vector of $\mathbb R^N$ and $B_1^N$ for the unit ball of $\ell_1^N$. For a set $T\subset\mathbb R^N$, $\operatorname{conv}(T)$ denotes its convex hull. For a matrix $B$, $\|B\|_{\mathrm{op}}:=\sup_{\|x\|_2\leq1}\|Bx\|_2$ denotes the operator norm, $\|B\|_F$ the Frobenius norm, and, for $p\geq1$, $\|B\|_{S_p}$ the Schatten $p$-norm, namely the $\ell_p$ norm of the vector of singular values of $B$. For a set $T$ equipped with a norm $\|\cdot\|$ and $u>0$, the covering number $\mathcal N(T,\|\cdot\|,u)$ is the minimal number of balls of radius $u$ whose union contains $T$. Throughout, $\varepsilon_1,\varepsilon_2,\ldots$ denote independent Rademacher random variables, uniform on $\{-1,+1\}$ and independent of all other randomness, and $C,c>0$ denote universal constants whose values may change at each occurrence.

\begin{table}[ht]
\centering
\caption{Comparison of the original KMR entropy argument with the
Schatten-moment refinement. We write
$V_z=m^{-1/2}R_\Omega C_z$ and $\Delta=\sqrt{K/m}$, and suppress the common
lower-order fine-scale term. The original KMR argument begins with the Fourier
bound $\|V_z\|_{\mathrm{op}}\leq m^{-1/2}\|Fz\|_\infty$, where
$F_{jk}=\exp(2\pi i jk/N)$ is the unnormalized discrete Fourier transform.}
\label{tab:kmr_schatten_comparison}
\begingroup
\small
\setlength{\tabcolsep}{4pt}
\renewcommand{\arraystretch}{1.2}
\begin{tabular}{p{0.18\textwidth}p{0.37\textwidth}p{0.37\textwidth}}
\hline
{\centering\textbf{Step}\par}
&
{\centering\textbf{Original KMR proof}\par}
&
{\centering\textbf{Schatten-moment refinement}\par}
\\
\hline
\textbf{Operator metric}
&
{\raggedright Use $\|V_z\|_{\mathrm{op}}\leq m^{-1/2}\|Fz\|_\infty$.\par}
&
{\raggedright Keep $V_z=m^{-1/2}R_\Omega C_z$ and work directly with the
partial-shift matrix.\par}
\\[2pt]
\textbf{Maurey sum}
&
{\raggedright Fourier maximum: $\sqrt{L\log N/m}$.\par}
&
{\raggedright Schatten estimate: $\sqrt{L\log(em)/m}$.\par}
\\[2pt]
\textbf{New logarithm}
&
{\raggedright Squaring the Fourier maximum contributes $\log N$.\par}
&
{\raggedright Squaring the Schatten estimate contributes $\log(em)$.\par}
\\[2pt]
\textbf{Atom count}
&
{\raggedright The $N$ coordinate atoms contribute $\log N$.\par}
&
{\raggedright Unchanged: the same atom-counting factor $\log N$ remains.\par}
\\
\hline
\textbf{Coarse entropy}
&
$\log\mathcal N\lesssim(\Delta/u)^2\log^2N$.
&
$\log\mathcal N\lesssim(\Delta/u)^2\log N\log(em)$.
\\[2pt]
\textbf{Chaining term}
&
$\gamma_2\lesssim\Delta\log(eK)\log N$.
&
$\gamma_2\lesssim\Delta\log(eK)\sqrt{\log(2N)\log(em)}$.
\\[2pt]
\textbf{RIP condition}
&
$m\gtrsim\delta^{-2}K\log^2(eK)\log^2(2N)$.
&
$m\gtrsim\delta^{-2}K\log^2(eK)\log(2N)\log(em)$.
\\
\hline
\end{tabular}
\endgroup
\end{table}

\begin{proof}[Proof of Theorem~\ref{thm:gaussian_upper}]
Let
\[
    \mathcal D_{K,N}
    :=
    \{
        x\in\mathbb R^N:
        \|x\|_2\leq 1,\ \|x\|_0\leq K
    \},
    \qquad
    V_x:=\frac{1}{\sqrt m}R_\Omega C_x,
\]
and write $\mathcal A:=\{V_x:x\in\mathcal D_{K,N}\}$.  By
Definition~\ref{def:partial_random_circulant} and commutativity of circular
convolution, $A_\Omega(g)x=V_xg$.  Since $|\Omega|=m$, the matrix $V_x$ consists of the $m$ rows of $\frac{1}{\sqrt m}C_x$ indexed by $\Omega$.  Therefore
\[
    \E_g[\|V_xg\|_2^2]
    =
    \|V_x\|_F^2
    =
    \|x\|_2^2.
\]
The first equality is the Gaussian second-moment identity
$\E_g[\|Bg\|_2^2]=\|B\|_F^2$.  The second holds because $V_x$ has $m$
rows, each with squared Euclidean norm $\|x\|_2^2/m$.  By
Definition~\ref{def:rip} and homogeneity, normalizing a nonzero
$x\in\mathcal D_{K,N}$ preserves its support, while the zero vector contributes zero to the supremum.  Thus
\[
    \delta_K(A_\Omega(g))
    =
    \sup_{V_x\in\mathcal A}
    |
        \|V_xg\|_2^2-\E_g[\|V_xg\|_2^2]
    |.
\]

The two radii entering the chaos estimate satisfy
\begin{equation}
\label{eq:gaussian_radii}
    d_F(\mathcal A)
    :=
    \sup_{V_x\in\mathcal A}\|V_x\|_F
    =1,
    \qquad
    d_{\mathrm{op}}(\mathcal A)
    :=
    \sup_{V_x\in\mathcal A}\|V_x\|_{\mathrm{op}}
    \leq
    \Delta,
    \quad
    \Delta:=\sqrt{ K/m}.
\end{equation}
For the Frobenius-radius equality, the preceding identity gives $\|V_x\|_F=\|x\|_2$, whose supremum over $\mathcal D_{K,N}$ is at most $1$ and is attained at $x=e_j$.  The operator-radius bound is obtained by taking the
supremum in the following three-step estimate:
\[
    \|V_x\|_{\mathrm{op}}
    \leq
    \frac{1}{\sqrt m}\|C_x\|_{\mathrm{op}}
    \leq
    \frac{1}{\sqrt m}\|x\|_1
    \leq
    \sqrt{\frac Km}\|x\|_2.
\]
The first inequality uses $\|R_\Omega\|_{\mathrm{op}}\leq1$; the second is Young's convolution inequality $\|x*y\|_2\leq\|x\|_1\|y\|_2$ on the finite group $\mathbb Z_N$ with counting measure \cite[Theorem~(20.18)]{hr79}; and the third is
Cauchy--Schwarz on the at most $K$ nonzero coordinates of $x$.
It remains to estimate
\[
    \gamma_2
    :=
    \gamma_2(\mathcal A,\|\cdot\|_{\mathrm{op}}).
\]

Here $\gamma_2$ denotes Talagrand's generic-chaining functional \cite[Chapter~2]{tal14}; we use only the entropy-integral upper bound recalled below.
Define the norm $\|z\|_X:=\|V_z\|_{\mathrm{op}}$ and let
\[
    U:=\{\pm e_j:j\in\mathbb Z_N\}.
\]
We have
\[
    \mathcal D_{K,N}
    \subset
    \sqrt K B_1^N
    =
    \sqrt K\operatorname{conv}(U).
\]
The inclusion uses the $K$-sparse Cauchy--Schwarz bound
$\|x\|_1\leq\sqrt K\|x\|_2$, together with $\|x\|_2\leq1$; the equality
is the standard identity $B_1^N=\operatorname{conv}(U)$.  Therefore we may use
Maurey's empirical method \cite{pis81} in the finite-atom form of
\cite[Lemma~4.2]{kmr14}, following Carl's entropy lemma
\cite[Proposition~3]{car85}.  The gain comes from the following Schatten
moment estimate.  Fix any $u_1,\ldots,u_L\in U$ and set
\[
    B_\ell:=R_\Omega C_{u_\ell},
    \qquad 1\leq\ell\leq L.
\]
Each $B_\ell$ is, up to sign, an $m\times N$ restriction of a cyclic
permutation matrix.  Thus
$
    \sum_{\ell=1}^L B_\ell B_\ell^*
    =
    LI_m,
$ 
whereas $\sum_{\ell=1}^L B_\ell^*B_\ell$ is diagonal. Indeed, each $B_\ell$ consists of $m$ distinct rows of a signed permutation matrix, so its rows are orthonormal and $B_\ell B_\ell^*=I_m$; moreover, $B_\ell^*B_\ell=C_{u_\ell}^*R_\Omega^*R_\Omega C_{u_\ell}$ arises from the zero--one diagonal projection $R_\Omega^*R_\Omega$ by conjugation with a signed permutation matrix, which only permutes the diagonal entries, so each $B_\ell^*B_\ell$ is a zero--one diagonal matrix with exactly $m$ ones. If
$c_1,\ldots,c_N$ are its diagonal entries, then
\[
    0\leq c_j\leq L,
    \qquad
    \sum_{j=1}^N c_j=mL.
\]
The two bounds hold because each $c_j$ is a sum of $L$ zero--one diagonal entries; the sum identity holds because each of the $L$ matrices $B_\ell^*B_\ell$ contributes exactly $m$ ones.  Consequently, for every $p\geq2$,
\[
    \|
        (\sum_{\ell=1}^L B_\ell B_\ell^*)^{1/2}
    \|_{S_p}
    =
    \sqrt L\,m^{1/p},
\]
and
\begin{align*}
    \|
        (\sum_{\ell=1}^L B_\ell^*B_\ell)^{1/2}
    \|_{S_p}^p
    &=
    \sum_{j=1}^N c_j^{p/2} 
    \leq
    L^{p/2-1}\sum_{j=1}^N c_j
    =
    L^{p/2}m.
\end{align*}
For the first display, the identity $\sum_{\ell=1}^L B_\ell B_\ell^*=LI_m$ gives $(\sum_{\ell=1}^L B_\ell B_\ell^*)^{1/2}=\sqrt L\,I_m$, whose $m$ singular values all equal $\sqrt L$, so its $S_p$ norm is $\sqrt L\,m^{1/p}$. In the second display, the first equality uses that $\sum_\ell B_\ell^*B_\ell$ is diagonal with
eigenvalues $c_j$.  The inequality follows term by term from
$c_j^{p/2}\leq L^{p/2-1}c_j$, since $p\geq2$ and $c_j\in[0,L]$; the final
equality uses $\sum_jc_j=mL$.  The Schatten-class noncommutative Khintchine inequality
\cite{lus86,lp91} now yields
\[
    (
        \E_\varepsilon[
        \|
            \sum_{\ell=1}^L\varepsilon_\ell B_\ell
        \|_{S_p}^p
        ]
    )^{1/p}
    \leq
    C\sqrt{pL}\,m^{1/p}.
\]
Taking $p=2\log(em)$, using $m^{1/p}\leq\sqrt e$,
$\|\cdot\|_{\mathrm{op}}\leq\|\cdot\|_{S_p}$, and Jensen's inequality gives
\begin{equation}
\label{eq:gaussian_nck}
    \E_\varepsilon[
    \|
        \sum_{\ell=1}^L\varepsilon_\ell B_\ell
    \|_{\mathrm{op}}
    ]
    \leq
    C\sqrt{L\log(em)}.
\end{equation}
The key point is that both square functions have $S_p$ norm at most $\sqrt L\,m^{1/p}$, so the choice $p\asymp\log(em)$ depends on $m$ rather than on the ambient dimension $N$.

For any $z\in\sqrt K\operatorname{conv}(U)$, choose independent
$U$-valued random vectors $Z_1,\ldots,Z_L$ with
$\E[Z_\ell]=z/\sqrt K$.  The standard symmetrization inequality is given in \cite[Lemma~6.3]{lt91}; see also
\cite[Appendix~A.2]{kmr14}.  Combining it with Eq.~\eqref{eq:gaussian_nck} gives
\[
\begin{aligned}
    \E[
    \|
        z-\frac{\sqrt K}{L}\sum_{\ell=1}^L Z_\ell
    \|_X
    ]
    &\leq
    \frac{2\sqrt K}{L}
    \E_{Z,\varepsilon}[
    \|
        \sum_{\ell=1}^L\varepsilon_\ell Z_\ell
    \|_X
    ]
    \leq
    C\sqrt{\frac{K\log(em)}{mL}}.
\end{aligned}
\]
The first inequality is the standard symmetrization bound.  For the second,
condition on $Z_1,\ldots,Z_L$, use
$\|\sum_\ell\varepsilon_\ell Z_\ell\|_X
=m^{-1/2}\|\sum_\ell\varepsilon_\ell R_\Omega C_{Z_\ell}\|_{\mathrm{op}}$,
and apply Eq.~\eqref{eq:gaussian_nck}.  Take
\[
    L
    =
    \lceil
        C\frac{K}{mu^2}\log(em)
    \rceil .
\]
Since $0<u\leq\Delta$ and $\log(em)\geq1$, we have $L\leq C(\Delta/u)^2\log(em)$; for sufficiently large $C$, the preceding
expectation is at most $u$, so some realization gives the required empirical
approximation.  Since there are at most $(2N)^L$ empirical averages, this proves the coarse entropy estimate
\begin{equation}
\label{eq:gaussian_coarse_entropy}
    \log\mathcal N(
        \mathcal D_{K,N},\|\cdot\|_X,u
    )
    \leq
    C( {\Delta}/{u})^2
    \log(2N)\log(em),
    \qquad 0<u\leq\Delta.
\end{equation}
If the covering centers are required to lie in $\mathcal D_{K,N}$, apply the construction at
radius $u/2$ and recenter every nonempty empirical ball at a point of
$\mathcal D_{K,N}$; this affects only the constant $C$.

At fine scales, cover each coordinate subspace separately.  On a fixed
support of cardinality at most $K$,
\[
    \|x-y\|_X
    \leq
    \Delta\|x-y\|_2.
\]
This Lipschitz bound follows from the three-step operator estimate above, applied to $x-y$, which is supported on the same coordinate set and therefore has at most $K$ nonzero coordinates.
A volumetric cover of the Euclidean unit ball, followed by a union bound over
the supports, gives
\begin{equation}
\label{eq:gaussian_fine_entropy}
    \log\mathcal N(
        \mathcal D_{K,N},\|\cdot\|_X,u
    )
    \leq
    K\log(eN/K)
    +
    K\log(1+C\Delta/u).
\end{equation}

Here the volumetric estimate covers the Euclidean unit ball of a fixed support with at most $(1+2\Delta/u)^K$ centers at $\ell_2$-radius $u/\Delta$ \cite[Proposition~C.3]{fr13}, and the union bound uses that the number of supports is at most $\sum_{k=0}^{K}\binom{N}{k}\leq(eN/K)^K$.

Dudley's entropy bound \cite{dud67}, using
Eq.~\eqref{eq:gaussian_fine_entropy} below $u_0=m^{-1/2}$ and
Eq.~\eqref{eq:gaussian_coarse_entropy} above $u_0$, gives
\[
\begin{aligned}
    \gamma_2
    &\leq
    C\int_0^{u_0}
        \sqrt{
            K\log(eN/K)
            +
            K\log(1+C\Delta/u)
        } \d u  +
    C\int_{u_0}^{\Delta}
        \frac{\Delta}{u}
        \sqrt{\log(2N)\log(em)} \d u.
\end{aligned}
\]
The first integral is at most
\[
    C\Delta
    (
        \sqrt{\log(eN/K)}
        +
        \sqrt{\log(eK)}
    ),
\]
and the second is at most
\[
    C\Delta\log(eK)\sqrt{\log(2N)\log(em)}.
\]
The first estimate uses $u_0=\Delta/\sqrt K$ and
$\int_0^1\sqrt{\log(1+C\sqrt K/v)}\d v\lesssim\sqrt{\log(eK)}$; the second
uses $\log(\Delta/u_0)=\frac12\log K$.
We therefore obtain
\begin{equation}
\label{eq:gaussian_gamma}
    \gamma_2
    \leq
    C\sqrt{\frac Km}
    (
        \sqrt{\log(eN/K)}
        +
        \log(eK)\sqrt{\log(2N)\log(em)}
    ).
\end{equation}

Since the entries of $g$ are independent standard Gaussian random variables, the
chaos estimate of Krahmer--Mendelson--Rauhut
\cite[Theorem~3.1]{kmr14} applies.  Set
\[
    E:=\gamma_2(\gamma_2+d_F)+d_Fd_{\mathrm{op}},
    \qquad
    V:=d_{\mathrm{op}}(\gamma_2+d_F),
    \qquad
    W:=d_{\mathrm{op}}^2.
\]
For every $t>0$, the chaos estimate gives
\[
    \Pr_g[
        \delta_K(A_\Omega(g))\geq CE+t
    ]
    \leq
    2\exp(
        -c\min\{
            \frac{t^2}{V^2},
            \frac{t}{W}
        \}
    ).
\]
The first term in Eq.~\eqref{eq:gaussian_sample_implicit}, together with
Eq.~\eqref{eq:gaussian_radii} and Eq.~\eqref{eq:gaussian_gamma}, implies
\[
    \gamma_2\leq c_0\delta,
    \qquad
    d_{\mathrm{op}}\leq c_0\delta,
\]
after increasing the universal constant in
Eq.~\eqref{eq:gaussian_sample_implicit}. Here $c_0\in(0,1)$ is a sufficiently small universal constant. For the first bound, substitute the
sample condition into Eq.~\eqref{eq:gaussian_gamma} and use
$\log(eN/K)\leq2\log(2N)$; the second follows from
$d_{\mathrm{op}}\leq\sqrt{K/m}$ in Eq.~\eqref{eq:gaussian_radii}.  Since
$d_F=1$ and $\delta\leq1$, increasing the universal constant further gives
\[
    CE\leq\delta/2,
    \qquad
    V\leq C\sqrt{K/m},
    \qquad
    W\leq K/m.
\]
The first bound follows from
$E=\gamma_2(\gamma_2+1)+d_{\mathrm{op}}$; the other two follow from the definitions of $V$ and $W$ and Eq.~\eqref{eq:gaussian_radii}.  Taking $t=\delta/2$
and using
$\{\delta_K(A_\Omega(g))>\delta\}\subseteq
\{\delta_K(A_\Omega(g))\geq CE+t\}$ gives
\[
    \Pr_g[
        \delta_K(A_\Omega(g))>\delta
    ]
    \leq
    2\exp(-c\frac{\delta^2m}{K})
    \leq
    \eta.
\]
The first inequality follows from $t^2/V^2\geq c\delta^2m/K$ and
$t/W\geq c\delta m/K\geq c\delta^2m/K$.  The second term in
Eq.~\eqref{eq:gaussian_sample_implicit} gives the last inequality after increasing
the universal constant.  This completes the proof.
\end{proof}



\section*{Acknowledgments}

The author used Gemini Pro 1.3, Codex Sol 5.5, and Claude Code Fable 5 in preparing this paper, specifically to assist with literature searches, grammatical review, and language editing. The author is also grateful to Eric Price, Jelani Nelson, and David Woodruff for helpful discussions on this topic during the author's doctoral studies.

\bibliographystyle{alpha}
\bibliography{ref}

\end{document}